\begin{document}

\preprint{APS/123-QED}

\title{Locally distinguish unextendible product bases by efficiently using entanglement resource}

\author{Zhi-Chao Zhang$^{1}$}
  \email{zhichao858@126.com}
  \author{Xia Wu$^{2}$}%
 \email{wuxia@cufe.edu.cn}
 \author{Xiangdong Zhang$^{1}$}%
 \email{zhangxd@bit.edu.cn}

\affiliation{%
 $^{1}$Beijing Key Laboratory of Nanophotonics and Ultrafine Optoelectronic Systems, School of Physics, Beijing Institute of Technology, Beijing, 100081, China\\
 $^{2}$School of Information, Central University of Finance and Economics, Beijing, 100081, China\\}

\date{\today}

\begin{abstract}
Any set of states which cannot be perfectly distinguished by local operations and classical communication (LOCC) alone, can always be locally distinguished using quantum teleportation with enough entanglement resource. However, in quantum information theory, entanglement is a very valuable resource, so it leaves the following open question: how to accomplish this task more efficiently than teleportation, that is, design the local discrimination protocol using less entanglement resource. In this paper, we first present two protocols to locally distinguish a set of unextendible product bases (UPB) in $5\otimes 5$ by using different entanglement resource. Then, we generalize the distinguishing methods for a class of UPB in $d\otimes d$, where $d$ is odd and $d\geqslant 3$. Furthermore, for a class of UPB in $d\otimes d$, where $d$ is even and $d\geqslant 4$, we prove that these states can also be distinguished by LOCC with multiple copies of low-dimensional entanglement resource. These results offer rather general insight into how to use entanglement resource more efficiently, and also reveal the phenomenon of less nonlocality with more entanglement.
\begin{description}
\item[PACS numbers]
03.67.Hk, 03.65.Ud, 03.67.Mn
\end{description}
\end{abstract}

\pacs{Valid PACS appear here}
\maketitle


\section{\label{sec:level1}Introduction\protect}
In quantum information theory, quantum information is always hidden in the quantum states. In addition, classical information can also be encoded into a set of orthogonal or nonorthogonal quantum states which are selected in advance. The decoding process can be seen as a protocol which can distinguish the encoded quantum states. Hence, it is fundamentally of interest to study quantum states discrimination problem. In this problem, a quantum system is prepared with a state which is secretly chosen from a known set, and the purpose is to determine in which state the system is. It is well known that a given set of orthogonal quantum states can be perfectly distinguished by performing a measurement on the entire system, and perfectly distinguishing nonorthogonal quantum states is not possible [1]. However, even if a given set only contains orthogonal multipartite quantum states, the situation may change dramatically when the physical conditions restrict our ability such that we are only allowed to do local operations and classical communication (LOCC), that is, the subsystems of a composite system are distributed among several spatially separated parties which are restricted to perform measurements only on their own subsystems and are allowed for any sequence of classical communication. Therefore, the local distinguishability of orthogonal quantum states has been widely studied [2-14], and practically applied in quantum secret sharing and quantum data hiding [15-18].

In the above study, product state may be very special because it admits local preparation according to some known rules, and therefore, it should be possible to learn about the state of the system with local measurements alone. However, Bennett \emph{et al.} presented a surprising result, that is, a complete orthogonal product bases in $3\otimes3$ cannot be distinguished by LOCC, and dubbed this phenomenon ``nonlocality without entanglement'' [2]. Then, various of related results have been presented [19-28]. There also exist ``incomplete bases'' to exhibit the property, known as unextendible product bases (UPB) [3,4], which means a set of mutually orthogonal product states satisfies the condition that no product state lies in the orthogonal complement of the subspace by these states, that is, the set of states cannot be extended by adding product state to it while preserving the orthogonality of the set. UPB cannot be distinguished perfectly by LOCC, and the projector onto that orthogonal complement is a mixed state which shows the fascinating phenomenon known as bound entanglement [3,29]. Thus, these states are of considerable interest in quantum information theory. 

For the locally indistinguishable orthogonal quantum states, the presence of additional entanglement can help us to change the distinguishability [30]. For example, using enough entanglement resource, we can teleport [31] the full multipartite state to a single party by LOCC, then this party can determine which state they were given [32]. However, in quantum information theory, entanglement is a very valuable resource, allowing remote parties to communicate in ways which were previously not thought possible, such as, the well-known protocols of teleportation [31], dense coding [33], data hiding [15,16]. In addition, entanglement can also be used to explore the potential power of quantum computer [34]. Therefore, an interesting question that remains to be answered is whether the above discrimination task can be accomplished more efficiently, i.e., using less entanglement resource.

In 2008, Cohen presented that certain classes of UPB in $m\otimes n (m\leq n)$ can be distinguished by LOCC with a $\lceil m/2 \rceil \otimes \lceil m/2 \rceil$ maximally entangled state, and left an interesting open question whether other UPB can also be locally distinguished by efficiently using entanglement resource [30]. Recently, for some locally indistinguishable orthogonal product states (not UPB), Zhang \emph{et al.} designed a discrimination protocol by using multiple copies of low-dimensional entanglement resource instead of a high-dimensional entanglement resource [35]. In addition, the method should be relatively easier to implement in real experiment because it only needs one equipment which can produce $2\otimes 2$ maximally entangled states instead of high-dimensional entangled states which will change for different sets of quantum states. However, whether the method is also applicative for UPB has not been answered [35]. 

Motivated by the above problems, in this paper, we first present a set of UPB that can be locally distinguished with a $3\otimes 3$ maximally entangled state in $5\otimes 5$. Then, we prove that this task can also be accomplished with the help of two copies of $2\otimes 2$ maximally entangled states. Furthermore, we generalize the above two results for a class of UPB in $d\otimes d$, where $d$ is odd and $d\geqslant 3$. Finally, for a class of UPB in $d\otimes d$, where $d=2n, n\geqslant 2$, which has been proved that are locally distinguishable with a $n\otimes n$ maximally entangled state \cite{S08}, here we show these states can also be perfectly distinguished by LOCC with $(n-1)$ copies of $2\otimes 2$ maximally entangled states. For the remaining open questions in [30,35], we have presented a positive answer. In addition, our results show that the locally indistinguishable quantum states may become distinguishable with a small amount of entanglement resource, and can also let people have a better understanding of the relationship between entanglement and nonlocality.

\theoremstyle{remark}
\newtheorem{definition}{\indent Definition}
\newtheorem{lemma}{\indent Lemma}
\newtheorem{theorem}{\indent Theorem}
\newtheorem{corollary}{\indent Corollary}

\def\QEDclosed{\mbox{\rule[0pt]{1.3ex}{1.3ex}}}
\def\QED{\QEDclosed}
\def\proof{\indent{\em Proof}.}
\def\endproof{\hspace*{\fill}~\QED\par\endtrivlist\unskip}

\section{Entanglement assisted discrimination}

In this section, we will present two different entanglement assisted discrimination protocols. First, we show a class of UPB in $5\otimes 5$ as follows, which has the structure of Fig.1 [36]. The state $|F\rangle$, known as the stopper state, is not shown, as it would cover the whole diagram. In addition, throughout the paper, we use the abbreviation $|i\pm j\pm \cdots \rangle=|i\rangle \pm |j\rangle \pm \cdots$, and ignore the normalization constant in many scenarios where the constant does not play any role.
\begin{figure}
\small
\centering
\includegraphics[scale=0.5]{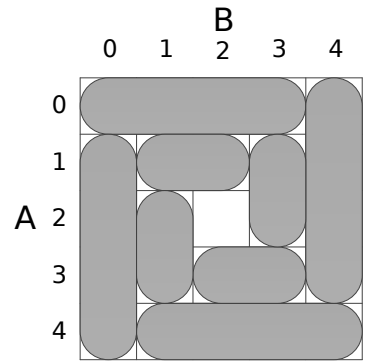}
\caption{Tile structure of a class of UPB in $5\otimes 5$. The labels on the rows (columns) correspond to Alice's (Bob's) standard basis, that is, $k$ is $|k\rangle$. The following tiles are the same.}
\end{figure}

\begin{eqnarray}
\label{eq.2}
\begin{split}
&|\phi_{1}\rangle=|0\rangle|0-1+2-3\rangle, |\phi_{2}\rangle=|0\rangle|0+1-2-3\rangle,\\
&|\phi_{3}\rangle=|0\rangle|0-1-2+3\rangle, |\phi_{4}\rangle=|0-1+2-3\rangle|4\rangle,\\
&|\phi_{5}\rangle=|0+1-2-3\rangle|4\rangle, |\phi_{6}\rangle=|0-1-2+3\rangle|4\rangle,\\
&|\phi_{7}\rangle=|4\rangle|1-2+3-4\rangle, |\phi_{8}\rangle=|4\rangle|1+2-3-4\rangle,\\
&|\phi_{9}\rangle=|4\rangle|1-2-3+4\rangle, |\phi_{10}\rangle=|1-2+3-4\rangle|0\rangle,\\
&|\phi_{11}\rangle=|1+2-3-4\rangle|0\rangle, |\phi_{12}\rangle=|1-2-3+4\rangle|0\rangle,\\
&|\phi_{13}\rangle=|1\rangle|1-2\rangle, |\phi_{14}\rangle=|1-2\rangle|3\rangle,\\
&|\phi_{15}\rangle=|3\rangle|2-3\rangle, |\phi_{16}\rangle=|2-3\rangle|1\rangle,\\
&|F\rangle=|0+1+2+3+4\rangle|0+1+2+3+4\rangle.
\end{split}
\end{eqnarray}

For the above UPB, the authors of [36] have presented that their nontrivial construction attributes a notable property compared to the trivial one which is always possible to distinguish few states perfectly from the UPB by orthogonality preserving LOCC. But in their case, not even a single state can be perfectly distinguished by such LOCC. This clearly indicates a stronger notion of local indistinguishability. Then, it is interesting whether the class of UPB needs more entanglement resource than [30]. In the following, we show locally distinguishing above quantum states only needs a $3\otimes 3$ maximally entangled state.

\begin{theorem}
In $5\otimes 5$, the states of (1) can be perfectly distinguished by LOCC with a $3\otimes 3$ maximally entangled state.
\end{theorem}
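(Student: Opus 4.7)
The plan is to use the $3\otimes 3$ maximally entangled state as a resource for a \emph{partial teleportation}: Alice coherently transfers a $3$-dimensional subspace of her $5$-dimensional system to Bob's side via the entanglement, while retaining an effective $2$-dimensional system herself. This is in the spirit of Cohen's protocol [30] for $m\otimes n$ UPBs with entanglement $\lceil m/2\rceil \otimes \lceil m/2\rceil$. Specifically, I would choose the split $\mathbb{C}^5_A = \mathrm{span}\{|0\rangle,|4\rangle\} \oplus \mathrm{span}\{|1\rangle,|2\rangle,|3\rangle\}$ and teleport the $3$-dim subspace. Because this transfer is isometric on Alice's system, the 17 UPB states are sent to a pairwise orthogonal set living on $\mathbb{C}^2_{A_{\text{new}}} \otimes \mathbb{C}^5_B \otimes \mathbb{C}^3_{B'}$, and the theorem reduces to showing that this transformed set is LOCC-distinguishable without any further entanglement. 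Note in particular that a naive projective measurement of Alice onto these two subspaces would destroy orthogonality on states such as $|\phi_4\rangle$, $|\phi_5\rangle$, $|\phi_6\rangle$, so the coherent (isometric) nature of the teleportation is essential.

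In detail, the proof would proceed in three steps. First, I would specify an explicit isometry $V:\mathbb{C}^5_A \to \mathbb{C}^2_{A_{\text{new}}} \otimes \mathbb{C}^3_{B'}$ that sends the standard basis of $A$ to five pairwise orthogonal product vectors, and then describe how the adapted teleportation protocol (a joint measurement on $A\otimes A'$, classical communication, and a conditional correction on $B'$) implements $V$ deterministically via LOCC. Second, I would rewrite each of the 17 UPB states under this transfer: the corner states $\phi_1$-$\phi_3$ and $\phi_7$-$\phi_9$ (with Alice-part $|0\rangle$ or $|4\rangle$) become localized at Alice tensored with a fixed vector on $B'$; the remaining states—the edges $\phi_4$-$\phi_6$, $\phi_{10}$-$\phi_{12}$, the centers $\phi_{13}$-$\phi_{16}$, and the stopper $F$—have Alice-support spanning both subspaces and therefore acquire a nontrivial structure distributed across Bob's enlarged $B\otimes B'$ system. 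Third, I would design an explicit LOCC protocol on the transformed states, typically consisting of Alice measuring her $2$-dim system in a well-chosen basis, broadcasting her outcome, and Bob completing the identification through a projective measurement on his $8$-dim space.

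The main obstacle is Step 3: constructing and verifying the LOCC protocol for the transformed states. The centers $\phi_{13}$-$\phi_{16}$ together with the stopper $F$ are precisely the obstruction to local distinguishability in the original UPB [36], and after partial teleportation their structure is distributed nontrivially between Alice's $2$-dim and Bob's $8$-dim systems. Orthogonality is automatic from the isometric nature of the transfer, but building explicit Alice- and Bob-side measurement bases that successfully separate all 17 transformed states without collapsing any orthogonality requires a careful, case-by-case analysis of the resulting inner-product structure. Bob's three extra dimensions give him the ``room'' needed to resolve the ambiguities left by the original UPB, but the concrete measurement choices must be tailored to the specific form of $\phi_{13}$-$\phi_{16}$ and $F$, rather than arising from a generic recipe.
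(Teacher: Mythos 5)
Your Steps 1 and 2 are implementable: embedding $\mathbb{C}^5_A$ into $\mathbb{C}^2_{A_{\mathrm{new}}}\otimes\mathbb{C}^3$ by an isometry and teleporting the qutrit factor with the $3\otimes 3$ maximally entangled state is a legitimate deterministic LOCC use of the resource. But the entire content of the theorem sits in your Step 3, which you explicitly leave open, so what you have is a plan rather than a proof --- and there is concrete evidence that Step 3 fails for the natural realization of your chosen split. Take $|0\rangle\mapsto|0\rangle_{A}|v_0\rangle_{B'}$, $|4\rangle\mapsto|0\rangle_{A}|v_1\rangle_{B'}$, $|j\rangle\mapsto|1\rangle_{A}|j-1\rangle_{B'}$ for $j=1,2,3$, with $|v_0\rangle=|0\rangle$. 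Then $|\phi_{4,5,6}\rangle$ become $\bigl(|0\rangle_{A}|v_0\rangle_{B'}+|1\rangle_{A}|c_i\rangle_{B'}\bigr)|4\rangle_B$ with $|c_4\rangle=(-1,1,-1)$, $|c_5\rangle=(1,-1,-1)$, $|c_6\rangle=(-1,-1,1)$, which satisfy $\langle c_i|c_j\rangle=-1$ for $i\neq j$. A short computation shows no qubit basis $\{\alpha|0\rangle+\beta|1\rangle,\ \bar\beta|0\rangle-\bar\alpha|1\rangle\}$ makes the three conditional Bob states pairwise orthogonal: the $\phi_4/\phi_5$ pair forces $|\alpha|=|\beta|$ and $\mathrm{Im}(\alpha\bar\beta)=0$, while the $\phi_4/\phi_6$ pair then forces $\mathrm{Re}(\alpha\bar\beta)=0$, a contradiction. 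By the Walgate--Hardy criterion for $2\otimes n$, the four states obtained by restricting $\phi_4,\phi_5,\phi_6$ and the stopper to Bob's tile $|4\rangle_B$ are then not LOCC-distinguishable at all in $\mathbb{C}^2_{A}\otimes\mathbb{C}^3_{B'}$ with this embedding, so your "Alice measures her $2$-dim system, Bob completes" template cannot be rescued by cleverness downstream; you would have to exhibit a different isometry $V$ (if one exists) and verify orthogonality preservation over the full measurement tree for all seventeen states. None of that is done.

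For comparison, the paper avoids this reduction entirely: nothing of Alice's system is ever shipped to Bob. Alice keeps her full five-dimensional system and performs a three-outcome projective measurement on ancilla-plus-system whose outcomes correlate the shared $3\otimes 3$ ancilla pair with the three blocks $\{|0\rangle,|1\rangle,|2\rangle\}$, $\{|3\rangle\}$, $\{|4\rangle\}$ of her basis (not your two blocks $\{|0\rangle,|4\rangle\}$ and $\{|1\rangle,|2\rangle,|3\rangle\}$). Superpositions such as $|0-1+2-3\rangle$ remain orthogonal because cross-block terms acquire orthogonal ancilla flags, and Bob then measures jointly on his system and his half of the ancilla in a multi-round protocol, reducing at the end to the inner-tile set $\{|\phi_{13}\rangle,\dots,|\phi_{16}\rangle\}$ plus a piece of $|F\rangle$, which is handled by the sub-protocol of [30]. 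Your mechanism is genuinely different from this, but as it stands it is unsupported precisely at its critical step, and the specific instance you sketch appears to break.
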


\begin{proof}
First of all, let Alice and Bob share a $3\otimes 3$ maximally entangled state $|\Psi\rangle_{ab}=|00\rangle+|11\rangle+|22\rangle$. 
Then, Alice performs a three-outcome measurement, each outcome corresponding to a rank-5 projector:

\begin{eqnarray}
\label{eq.2}
\begin{split}
A_{1}&=|00\rangle_{aA}\langle00|+|01\rangle_{aA}\langle01|+|02\rangle_{aA}\langle02|\\
&+|13\rangle_{aA}\langle13|+|24\rangle_{aA}\langle24|,\\
A_{2}&=|10\rangle_{aA}\langle10|+|11\rangle_{aA}\langle11|+|12\rangle_{aA}\langle12|\\
&+|23\rangle_{aA}\langle23|+|04\rangle_{aA}\langle04|,\\
A_{3}&=|20\rangle_{aA}\langle20|+|21\rangle_{aA}\langle21|+|22\rangle_{aA}\langle22|\\
&+|03\rangle_{aA}\langle03|+|14\rangle_{aA}\langle14|.
\end{split}
\end{eqnarray}

For operating with $A_{1}$ on systems $aA$, the picture of Fig. 2 is obtained, each of the initial states is transformed into:

\begin{figure}
\small
\centering
\includegraphics[scale=0.5]{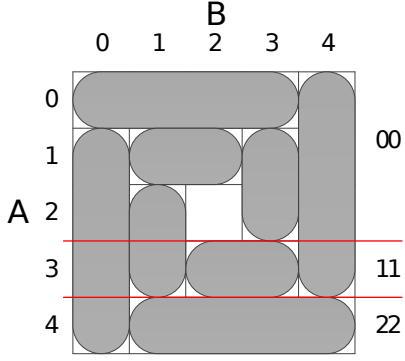}
\caption{Tile structure following Alice's first measurement with outcome $A_{1}$. The labels on the right rows correspond to Alice's and Bob's assisted systems, such as, $00$ is $|00\rangle_{ab}$. The following tiles are the same}
\end{figure}

\begin{eqnarray}
\label{eq.2}
\begin{split}
&|\phi'_{i}\rangle=|\phi_{i}\rangle|00\rangle,i=1, 2, 3, 13, 14,\\ 
&|\phi'_{i}\rangle=|\phi_{i}\rangle|11\rangle,i=15,\\
&|\phi'_{i}\rangle=|\phi_{i}\rangle|22\rangle,i=7, 8, 9,\\
&|\phi'_{4}\rangle=|0-1+2\rangle|4\rangle|00\rangle-|3\rangle|4\rangle|11\rangle,\\
&|\phi'_{5}\rangle=|0+1-2\rangle|4\rangle|00\rangle-|3\rangle|4\rangle|11\rangle,\\
&|\phi'_{6}\rangle=|0-1-2\rangle|4\rangle|00\rangle+|3\rangle|4\rangle|11\rangle,\\
&|\phi'_{10}\rangle=|1-2\rangle|0\rangle|00\rangle+|3\rangle|0\rangle|11\rangle-|4\rangle|0\rangle|22\rangle,\\
&|\phi'_{11}\rangle=|1+2\rangle|0\rangle|00\rangle-|3\rangle|0\rangle|11\rangle-|4\rangle|0\rangle|22\rangle,\\
&|\phi'_{12}\rangle=|1-2\rangle|0\rangle|00\rangle-|3\rangle|0\rangle|11\rangle+|4\rangle|0\rangle|22\rangle,\\
&|\phi'_{16}\rangle=|2\rangle|1\rangle|00\rangle-|3\rangle|1\rangle|11\rangle,\\
&|F\rangle\longrightarrow|0+1+2\rangle|0+1+2+3+4\rangle|00\rangle+\\
&|3\rangle|0+1+2+3+4\rangle|11\rangle+|4\rangle|0+1+2+3+4\rangle|22\rangle.
\end{split}
\end{eqnarray}

For operating with $A_{2}$ on systems $aA$, it creates new states which differ from the states (3) only by ancillary systems $|00\rangle_{ab}\longrightarrow|11\rangle_{ab}$, $|11\rangle_{ab}\longrightarrow|22\rangle_{ab}$, and $|22\rangle_{ab}\longrightarrow|00\rangle_{ab}$. Then, the latter can be handled using the exact same method as $A_{1}$. Operator $A_{3}$ is similar. Thus, we only need to discuss $A_{1}$.

Let us now describe how the parties can proceed from
here to distinguish these states. Bob makes a six-outcome
projective measurement, and we begin by considering the
first outcome, $B_{1}=|2\rangle_{b}\langle2|\otimes|1-2+3-4\rangle_{B}\langle1-2+3-4|$. The only remaining
possibility is $|\phi'_{7}\rangle$, which has thus been successfully identified. In the same way, Bob can identify $|\phi'_{8,9}, F\rangle$ by three projectors $B_{2}=|2\rangle_{b}\langle2|\otimes|1+2-3-4\rangle_{B}\langle1+2-3-4|$, $B_{3}=|2\rangle_{b}\langle2|\otimes|1-2-3+4\rangle_{B}\langle1-2-3+4|$, $B_{4}=|2\rangle_{b}\langle2|\otimes|1+2+3+4\rangle_{B}\langle1+2+3+4|$, respectively.

Using a rank-2 projector $B_{5}=|0\rangle_{b}\langle0|\otimes|4\rangle_{B}\langle4|+|1\rangle_{b}\langle1|\otimes|4\rangle_{B}\langle4|$ onto the Bob's Hilbert space, it leaves $|\phi'_{4,5,6}\rangle$, $|F\rangle\longrightarrow|0+1+2\rangle|4\rangle|00\rangle+|3\rangle|4\rangle|11\rangle$ to distinguish, and annihilates other states in (3). Then, Bob makes a projective measurement on system $b$ by projecting onto $|0+1\rangle_{b}$, $|0-1\rangle_{b}$ and $|2\rangle_{b}$. The latter has vanishing probability and the other two outcomes make Bob have the same state and Alice have orthogonal states. Thus, Alice can distinguish these states.

Bob's last outcome is a projector onto the remaining
part of Bob's Hilbert space $B_{6}=I-B_{1}-B_{2}-B_{3}-B_{4}-B_{5}$. This leaves $|\phi'_{1,2,3,10,\cdots,16}\rangle$ and $|F\rangle\longrightarrow|0+1+2\rangle|0+1+2+3\rangle|00\rangle+|3\rangle|0+1+2+3\rangle|11\rangle+|4\rangle|0\rangle|22\rangle$. Then, let Alice use the projector $A_{61}=|0\rangle_{a}\langle0|\otimes|0\rangle_{A}\langle0|$, it leaves $|\phi'_{1,2,3}\rangle$ and $|F\rangle\longrightarrow|0\rangle|0+1+2+3\rangle|00\rangle$, which can be easily distinguished by Bob. When Alice uses a projector $A_{62}=I-A_{61}$, it can leave $|\phi'_{10,\cdots,16}\rangle$, $|F\rangle\longrightarrow|1+2\rangle|0+1+2+3\rangle|00\rangle+|3\rangle|0+1+2+3\rangle|11\rangle+|4\rangle|0\rangle|22\rangle$ and annihilate other states. Then, Bob uses $B_{621}=(|0\rangle_{b}\langle0|+|1\rangle_{b}\langle1|+|2\rangle_{b}\langle2|)\otimes|0\rangle_{B}\langle0|$, it leaves $|\phi'_{10,11,12}\rangle$ and $|F\rangle\longrightarrow|1+2\rangle|0\rangle|00\rangle+|3\rangle|1\rangle|11\rangle+|4\rangle|0\rangle|22\rangle$. Similar to operator $B_{5}$, Bob can make a measurement in the Fourier basis on system $b$, and get the same state in Bob's party. Then, Alice can distinguish these states. Finally, for the projector $B_{622}=I-B_{621}$, it leaves $|\phi'_{13,\cdots,16}\rangle$ and $|F\rangle\longrightarrow|1+2\rangle|1+2+3\rangle|00\rangle+|3\rangle|1+2+3\rangle|11\rangle$, which can be perfectly distinguished by LOCC [30].

That is to say, we have succeeded in designing a protocol to perfectly distinguish the states (1) using LOCC with a $3\otimes 3$ maximally entangled state. 
This completes
the proof.          
\end{proof}

In the following, we present a different discrimination method for the states (1) with two copies of low-dimensional entanglement resource. 

\begin{theorem}
In $5\otimes 5$, the states of (1) can also be perfectly distinguished by LOCC with two copies of $2\otimes 2$ maximally entangled states.
\end{theorem}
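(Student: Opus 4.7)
The plan is to parallel the structure of the Theorem 1 protocol, but to consume the two copies of $2\otimes 2$ maximally entangled states sequentially, one per round, rather than merging them into a single higher-dimensional resource. Since a single $2\otimes 2$ pair offers only a two-dimensional ancilla, Alice cannot partition her $5$-dimensional basis into three groups in one shot as she did with the $3\otimes 3$ state; the three-way split used there must be replaced by two binary splits, one for each entangled pair.

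In the first round, Alice uses $|\Phi_1\rangle_{a_1b_1}=|00\rangle+|11\rangle$ to perform a two-outcome measurement that separates her basis into, say, $\{|0\rangle,|1\rangle,|2\rangle\}$ (tagged by the ancilla $|00\rangle_{a_1b_1}$) and $\{|3\rangle,|4\rangle\}$ (tagged by $|11\rangle_{a_1b_1}$). UPB states whose Alice component lies entirely in one block simply inherit the corresponding ancilla tag, while states like $|\phi_4\rangle,\ldots,|\phi_6\rangle$, $|\phi_{10}\rangle,\ldots,|\phi_{12}\rangle$, $|\phi_{14}\rangle,|\phi_{16}\rangle$, and the stopper $|F\rangle$ break into two pieces that are tagged differently on the two branches, in direct analogy with what happens to $|\phi'_4\rangle,\ldots,|\phi'_6\rangle,|\phi'_{10}\rangle,\ldots,|\phi'_{12}\rangle$ in Theorem 1. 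The opposite outcome will reduce to the first by a relabeling of the ancilla, just as $A_2,A_3$ reduced to $A_1$ there, so only one outcome need be analysed in detail.

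In the second round, the still-unused pair $|\Phi_2\rangle_{a_2b_2}$ is consumed by another two-outcome measurement that finishes resolving whichever block has not yet been pinned down by local means -- for example, splitting the three-element block $\{|0\rangle,|1\rangle,|2\rangle\}$ into $\{|0\rangle,|1\rangle\}$ versus $\{|2\rangle\}$, or symmetrically splitting Bob's side when that keeps the number of sub-cases smaller. Once every branch carries a unique ancilla label, Bob can then run projective measurements of the same flavour as the family $B_1,\ldots,B_6$ of Theorem 1, interleaved with Fourier-basis measurements on the ancilla qubits $b_1,b_2$ that force a common post-measurement state on Bob's side while leaving Alice's residual vectors mutually orthogonal. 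The central tile states $|\phi_{13}\rangle,\ldots,|\phi_{16}\rangle$ together with their surviving fragment of $|F\rangle$ should by this point have been reduced to an LOCC-distinguishable problem on a $2\otimes 3$ (or $3\otimes 2$) support of the type already treated in [30].

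The main obstacle will be the bookkeeping: verifying that each measurement outcome leaves a residual set of states, together with the correct surviving piece of the stopper $|F\rangle$, that is either already LOCC-distinguishable or can be handed off to the next round with the remaining entanglement still intact. Because each separate $2\otimes 2$ pair has lower Schmidt rank than the $3\otimes 3$ resource of Theorem 1, the two partitions must be coordinated so that no intermediate branch ever requires a genuinely three-dimensional ancilla to be disambiguated; finding that coordination, rather than executing any single measurement, is the delicate part of the construction.
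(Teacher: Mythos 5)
Your high-level strategy is exactly the paper's: consume the two $2\otimes2$ pairs via two successive two-outcome measurements whose combined ancilla labels emulate the three-valued tag of the $3\otimes3$ resource in Theorem 1, then rerun the Theorem 1 protocol essentially verbatim. The first split you choose, $\{|0\rangle,|1\rangle,|2\rangle\}$ versus $\{|3\rangle,|4\rangle\}$, is also the paper's $A_{1}$. The gap is in the second round. Theorem 1 works with the three-way partition $\{0,1,2\},\{3\},\{4\}$ of Alice's basis, so the second qubit pair must refine the coarse block $\{3,4\}$ into $\{3\}$ and $\{4\}$; the paper does this with the split $\{0,1,2,3\}$ versus $\{4\}$ (its $A_{11},A_{12}$), giving the three distinct labels $(a_{1}a_{2})=00,10,11$ for the blocks $\{0,1,2\}$, $\{3\}$, $\{4\}$. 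You instead propose refining the block $\{0,1,2\}$ into $\{0,1\}$ versus $\{2\}$, which yields the partition $\{0,1\},\{2\},\{3,4\}$ and leaves $|3\rangle$ and $|4\rangle$ with the same tag. That choice breaks the very first steps of the Theorem 1 protocol you want to reuse: Bob isolates $|\phi_{7,8,9}\rangle$ and the corresponding branch of $|F\rangle$ by projecting onto the ancilla label unique to $|4\rangle$, and this is essential because, e.g., $|\phi_{7}\rangle=|4\rangle|1-2+3-4\rangle$ and $|\phi_{15}\rangle=|3\rangle|2-3\rangle$ have non-orthogonal Bob components and are told apart only by Alice's $|3\rangle$ versus $|4\rangle$. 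With $\{3,4\}$ sharing a tag, the projectors $B_{1},\ldots,B_{4}$ no longer single out individual states, and you would need a genuinely new analysis rather than the claimed reduction to Theorem 1. The block $\{0,1,2\}$, by contrast, never needs refining, since it already carries a single tag in Theorem 1.

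You also explicitly defer the ``coordination'' of the two splits to later, but that coordination \emph{is} the content of the proof; as written, the proposal neither commits to the correct pair of partitions nor verifies that each branch ends in an LOCC-distinguishable residue. To repair it, replace your second split by $\{0,1,2,3\}$ versus $\{4\}$, write out the tagged states (the paper's Eq.~(6), where the three labels $|0000\rangle,|1100\rangle,|1111\rangle$ on $a_{1}b_{1}a_{2}b_{2}$ play the role of $|00\rangle,|11\rangle,|22\rangle$ on $ab$), and then transcribe the Theorem 1 measurements $B_{1},\ldots,B_{6},A_{61},A_{62},B_{621},B_{622}$ with $|k\rangle_{b}$ replaced by the corresponding two-qubit label on $b_{1}b_{2}$; the final residual set $|\phi'_{13,\ldots,16}\rangle$ plus the surviving piece of $|F\rangle$ is then handled by the multi-copy result of [35] rather than [30].
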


\begin{proof}
Similarly, Alice and Bob first share two $2\otimes 2$ maximally entangled states $|\Psi\rangle_{a_{1}b_{1}}=|00\rangle+|11\rangle$ and $|\Psi\rangle_{a_{2}b_{2}}=|00\rangle+|11\rangle$. Then, Alice performs a two outcome measurement, each outcome corresponding to a rank-5 projector:

\begin{eqnarray}
\label{eq.2}
\begin{split}
A_{1}&=(|00\rangle_{a_{1}A}\langle00|+|01\rangle_{a_{1}A}\langle01|+|02\rangle_{a_{1}A}\langle02|\\
&+|13\rangle_{a_{1}A}\langle13|+|14\rangle_{a_{1}A}\langle14|)\otimes I_{a_{2}},\\
A_{2}&=(|10\rangle_{a_{1}A}\langle10|+|11\rangle_{a_{1}A}\langle11|+|12\rangle_{a_{1}A}\langle12|\\
&+|03\rangle_{a_{1}A}\langle03|+|04\rangle_{a_{1}A}\langle04|)\otimes I_{a_{2}}.
\end{split}
\end{eqnarray}

Similar to Theorem 1, we only need to consider $A_{1}$. Then, Alice performs a two outcome measurement, each outcome corresponding to a rank-5 projector:

\begin{eqnarray}
\label{eq.2}
\begin{split}
A_{11}&=(|00\rangle_{a_{2}A}\langle00|+|01\rangle_{a_{2}A}\langle01|+|02\rangle_{a_{2}A}\langle02|\\
&+|03\rangle_{a_{2}A}\langle03|+|14\rangle_{a_{2}A}\langle14|)\otimes I_{a_{1}},\\
A_{12}&=(|10\rangle_{a_{2}A}\langle10|+|11\rangle_{a_{2}A}\langle11|+|12\rangle_{a_{2}A}\langle12|\\
&+|13\rangle_{a_{2}A}\langle13|+|04\rangle_{a_{2}A}\langle04|)\otimes I_{a_{1}}.
\end{split}
\end{eqnarray}

In the same way, we only consider $A_{11}$ and the picture of Fig. 3 is obtained, that is, each of the initial states is transformed into:

\begin{figure}
\small
\centering
\includegraphics[scale=0.5]{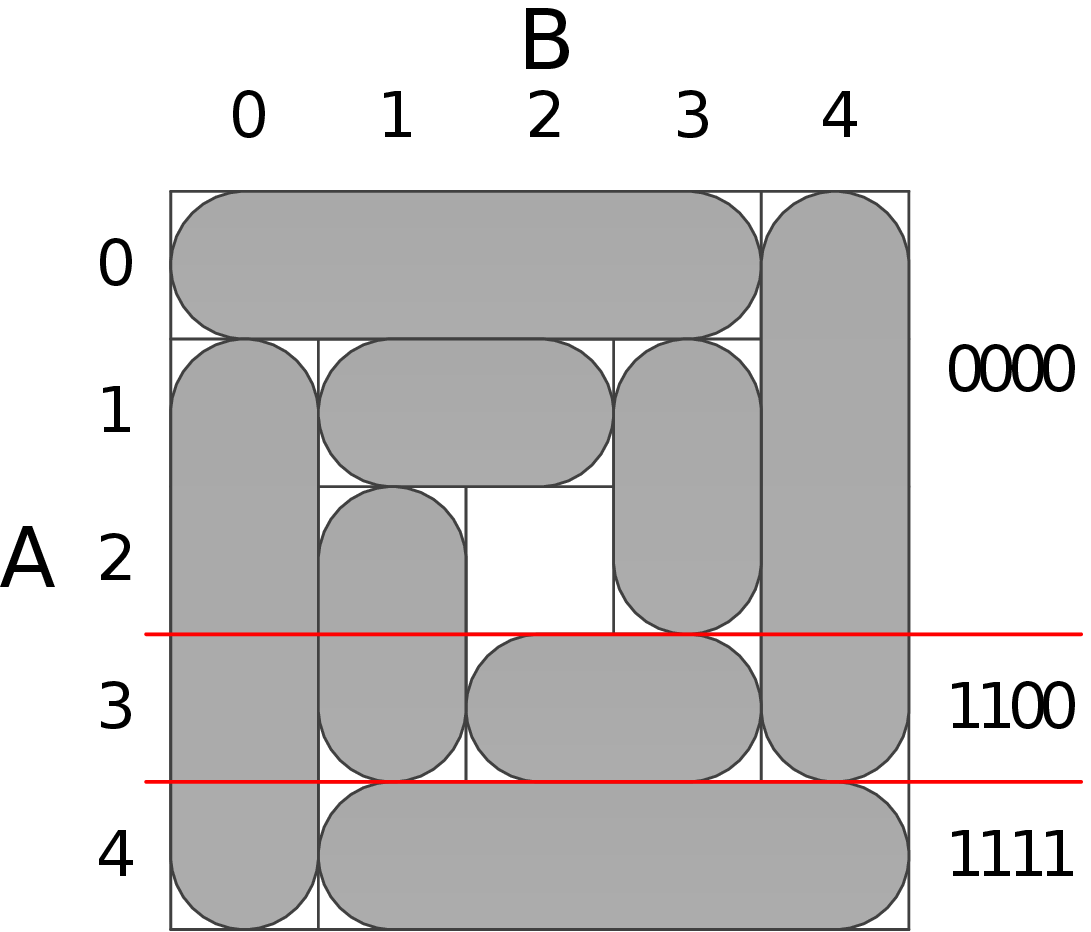}
\caption{Tile structure following Alice's first measurement with outcome $A_{11}$.}
\end{figure}

\begin{eqnarray}
\label{eq.2}
\begin{split}
&|\phi'_{i}\rangle=|\phi_{i}\rangle|0000\rangle,i=1, 2, 3, 13, 14,\\ 
&|\phi'_{i}\rangle=|\phi_{i}\rangle|1100\rangle,i=15,\\
&|\phi'_{i}\rangle=|\phi_{i}\rangle|1111\rangle,i=7, 8, 9,\\
&|\phi'_{4}\rangle=|0-1+2\rangle|4\rangle|0000\rangle-|3\rangle|4\rangle|1100\rangle,\\
&|\phi'_{5}\rangle=|0+1-2\rangle|4\rangle|0000\rangle-|3\rangle|4\rangle|1100\rangle,\\
&|\phi'_{6}\rangle=|0-1-2\rangle|4\rangle|0000\rangle+|3\rangle|4\rangle|1100\rangle,\\
&|\phi'_{10}\rangle=|1-2\rangle|0\rangle|0000\rangle+|3\rangle|0\rangle|1100\rangle-|4\rangle|0\rangle|1111\rangle,\\
&|\phi'_{11}\rangle=|1+2\rangle|0\rangle|0000\rangle-|3\rangle|0\rangle|1100\rangle-|4\rangle|0\rangle|1111\rangle,\\
&|\phi'_{12}\rangle=|1-2\rangle|0\rangle|0000\rangle-|3\rangle|0\rangle|1100\rangle+|4\rangle|0\rangle|1111\rangle,\\
&|\phi'_{16}\rangle=|2\rangle|1\rangle|0000\rangle-|3\rangle|1\rangle|1100\rangle,\\
&|F\rangle\longrightarrow|0+1+2\rangle|0+1+2+3+4\rangle|0000\rangle+\\
&|3\rangle|0+1+2+3+4\rangle|1100\rangle+|4\rangle|0+1+2+3+4\rangle|1111\rangle.
\end{split}
\end{eqnarray}
where $|0000\rangle$ means $|0000\rangle_{a_{1}b_{1}a_{2}b_{2}}$, others are similar.

Then, let Bob make a six-outcome projective measurement, and we begin by considering the
first outcome, $B_{1}=|1\rangle_{b_{1}}\langle1|\otimes|1\rangle_{b_{2}}\langle1|\otimes|1-2+3-4\rangle_{B}\langle1-2+3-4|$. The only remaining
possibility is $|\phi'_{7}\rangle$, which has thus been successfully identified. In the same way, Bob can identify $|\phi'_{8,9}, F\rangle$ by three projectors $B_{2}=|1\rangle_{b_{1}}\langle1|\otimes|1\rangle_{b_{2}}\langle1|\otimes|1+2-3-4\rangle_{B}\langle1+2-3-4|$, $B_{3}=|1\rangle_{b_{1}}\langle1|\otimes|1\rangle_{b_{2}}\langle1|\otimes|1-2-3+4\rangle_{B}\langle1-2-3+4|$, $B_{4}=|1\rangle_{b_{1}}\langle1|\otimes|1\rangle_{b_{2}}\langle1|\otimes|1+2+3+4\rangle_{B}\langle1+2+3+4|$, respectively.

Using a rank-2 projector $B_{5}=|0\rangle_{b_{1}}\langle0|\otimes|0\rangle_{b_{2}}\langle0|\otimes|4\rangle_{B}\langle4|+|1\rangle_{b_{1}}\langle1|\otimes|0\rangle_{b_{2}}\langle0|\otimes|4\rangle_{B}\langle4|$ onto the Bob's Hilbert space, it leaves $|\phi'_{4,5,6}\rangle$, $|F\rangle\longrightarrow|0+1+2\rangle|4\rangle|0000\rangle+|3\rangle|4\rangle|1100\rangle$ to distinguish, and annihilates other states in (6). Then, Bob make a projective measurement on system $b_{1}$ by projecting onto $|0+1\rangle_{b_{1}}$ and $|0-1\rangle_{b_{1}}$. The two outcomes make Bob have the same state and Alice have orthogonal states. Thus, Alice can distinguish these states.

Bob's last outcome is a projector onto the remaining
part of Bob's Hilbert space $B_{6}=I-B_{1}-B_{2}-B_{3}-B_{4}-B_{5}$. This leaves $|\phi'_{1,2,3,10,\cdots,16}\rangle$ and $|F\rangle\longrightarrow|0+1+2\rangle|0+1+2+3\rangle|0000\rangle+|3\rangle|0+1+2+3\rangle|1100\rangle+|4\rangle|0\rangle|1111\rangle$. Then, let Alice use the projector $A_{61}=|0\rangle_{a_{1}}\langle0|\otimes|0\rangle_{a_{2}}\langle0|\otimes|0\rangle_{A}\langle0|$, it leaves $|\phi'_{1,2,3}\rangle$ and $|F\rangle\longrightarrow|0\rangle|0+1+2+3\rangle|0000\rangle$, which can be easily distinguished by Bob. When Alice uses a projector $A_{62}=I-A_{61}$, it can leave $|\phi'_{10,\cdots,16}\rangle$, $|F\rangle\longrightarrow|1+2\rangle|0+1+2+3\rangle|0000\rangle+|3\rangle|0+1+2+3\rangle|1100\rangle+|4\rangle|0\rangle|1111\rangle$ and annihilate other states. Then, Bob uses $B_{621}=(|0\rangle_{b_{1}}\langle0|\otimes|0\rangle_{b_{2}}\langle0|+|1\rangle_{b_{1}}\langle1|\otimes|0\rangle_{b_{2}}\langle0|+|1\rangle_{b_{1}}\langle1|\otimes|1\rangle_{b_{2}}\langle1|)\otimes|0\rangle_{B}\langle0|$, it leaves $|\phi'_{10,11,12}\rangle$ and $|F\rangle\longrightarrow|1+2\rangle|0\rangle|0000\rangle+|3\rangle|1\rangle|1100\rangle+|4\rangle|0\rangle|1111\rangle$. Similar to operator $B_{5}$, Bob can make two measurements on systems $b_{1}$ and $b_{2}$ by the bases $|0\pm1\rangle$, and get the same state in Bob's party. Then, Alice can distinguish these states. Finally, for the projector $B_{622}=I-B_{621}$, it leaves $|\phi'_{13,\cdots,16}\rangle$ and $|F\rangle\longrightarrow|1+2\rangle|1+2+3\rangle|0000\rangle+|3\rangle|1+2+3\rangle|1100\rangle$, which can be perfectly distinguished by LOCC [35].

Therefore, we have also succeeded in designing a protocol to locally distinguish the states (1) with two copies of $2\otimes 2$ maximally entangled states. \end{proof}

In the following, we generalize the results for $d\otimes d$ quantum system, where $d$ is odd.

First, we show a class of UPB in $d\otimes d$, which has been constructed in [36] and has the tile structure given in Fig. 4, where $d$ is odd. In addition, $i$ means $\sqrt{-1}$ in the following.

\begin{figure}
\small
\centering
\includegraphics[scale=0.5]{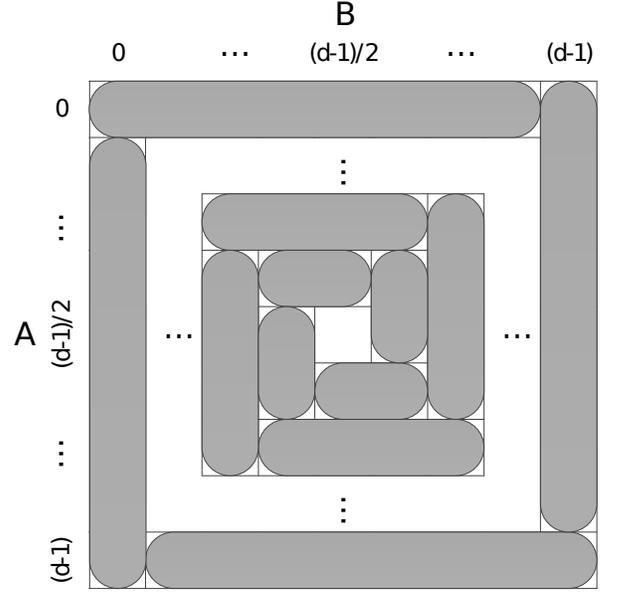}
\caption{Tile structure of a class of UPB in $d\otimes d$, $d$ is odd.}
\end{figure}

\begin{eqnarray}
\label{eq.2}
\begin{split}
&|\phi_{k}\rangle=|0\rangle|\sum\limits^{d-2}_{j=0}\omega^{jk}(j)\rangle,\omega=e^{\frac{2\pi i}{d-1}}, k=1,\cdots,d-2,\\
&|\phi_{k+d-2}\rangle=|\sum\limits^{d-2}_{j=0}\omega^{jk}(j)\rangle|d-1\rangle,\omega=e^{\frac{2\pi i}{d-1}}, k=1,\cdots,d-2,\\
&|\phi_{k+2d-4}\rangle=|d-1\rangle|\sum\limits^{d-1}_{j=1}\omega^{jk}(j)\rangle,\omega=e^{\frac{2\pi i}{d-1}}, k=1,\cdots,d-2,\\
&|\phi_{k+3d-6}\rangle=|\sum\limits^{d-1}_{j=1}\omega^{jk}(j)\rangle|0\rangle,\omega=e^{\frac{2\pi i}{d-1}}, k=1,\cdots,d-2,\\
&|\phi_{k+4d-8}\rangle=|1\rangle|\sum\limits^{d-3}_{j=1}\omega^{jk}(j)\rangle,\omega=e^{\frac{2\pi i}{d-3}}, k=1,\cdots,d-4,\\
&|\phi_{k+5d-12}\rangle=|\sum\limits^{d-3}_{j=1}\omega^{jk}(j)\rangle|d-2\rangle,\omega=e^{\frac{2\pi i}{d-3}}, k=1,\cdots,d-4,\\
&|\phi_{k+6d-16}\rangle=|d-2\rangle|\sum\limits^{d-2}_{j=2}\omega^{jk}(j)\rangle,\omega=e^{\frac{2\pi i}{d-3}}, k=1,\cdots,d-4,\\
&|\phi_{k+7d-20}\rangle=|\sum\limits^{d-2}_{j=2}\omega^{jk}(j)\rangle|1\rangle,\omega=e^{\frac{2\pi i}{d-3}}, k=1,\cdots,d-4,\\
&\vdots\\
&|\phi_{d^{2}-2d-2}\rangle=|(d-3)/2\rangle|(d-3)/2-(d-1)/2\rangle,\\
&|\phi_{d^{2}-2d-1}\rangle=|(d-3)/2-(d-1)/2\rangle|(d+1)/2\rangle,\\
&|\phi_{d^{2}-2d}\rangle=|(d+1)/2\rangle|(d-1)/2-(d+1)/2\rangle,\\
&|\phi_{d^{2}-2d+1}\rangle=|(d-1)/2-(d+1)/2\rangle|(d-3)/2\rangle,\\
&|F\rangle=|0+1+\cdots+(d-1)\rangle|0+1+\cdots+(d-1)\rangle.
\end{split}
\end{eqnarray}

Then, we present the local distinguishability of above quantum states with a $(d+1)/2\otimes (d+1)/2$ maximally entangled state.

\begin{theorem}
In $d\otimes d$, where $d$ is odd, the states of (7) can be perfectly distinguished by LOCC with a $(d+1)/2\otimes (d+1)/2$ maximally entangled state.
\end{theorem}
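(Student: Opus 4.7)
The plan is to imitate the proof of Theorem 1 at the general odd $d$. Alice and Bob share the $(d+1)/2\otimes(d+1)/2$ maximally entangled state $|\Psi\rangle_{ab}=\sum_{s=0}^{(d-1)/2}|ss\rangle_{ab}$. Alice then performs a $(d+1)/2$-outcome projective measurement on $aA$: the outcome $A_j$ is the rank-$d$ projector that pairs Alice's first $(d+1)/2$ basis states $|0\rangle,|1\rangle,\ldots,|(d-1)/2\rangle$ with the single ancilla label $|j\rangle_a$ and cyclically distributes the remaining states $|(d+1)/2\rangle,\ldots,|d-1\rangle$ one-to-one across the other $(d-1)/2$ ancilla labels (the obvious analog of~(2)). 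These projectors sum to $I_{aA}$, and the cyclic ancilla relabelling $|s\rangle_a\mapsto|s+1\ \bmod(d+1)/2\rangle_a$ permutes the outcomes, so it suffices to analyze $A_0$.

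Under $A_0$, the Fourier superpositions in~(7) whose Alice-side support lies in $\{0,1,\ldots,(d-1)/2\}$ remain intact tensored with ancilla $|00\rangle_{ab}$, while those reaching into the high indices split into a coherent low part (ancilla $|00\rangle$) plus individual high terms each carrying their own ancilla label. Bob now decodes layer by layer from the outside in. First, rank-one projectors of the form $|(d-1)/2\rangle_b\langle(d-1)/2|\otimes|f_k\rangle_B\langle f_k|$, with $|f_k\rangle$ running over the Fourier basis of $\{|1\rangle,\ldots,|d-1\rangle\}$ together with the uniform vector, pick off the outer group $|\phi_{k+2d-4}\rangle$ and a slice of $|F\rangle$. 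Next, a rank-$(d-1)/2$ projector onto Bob's $|d-1\rangle$ tensored with the low-ancilla sector isolates $|\phi_{k+d-2}\rangle$ and another slice of $|F\rangle$; a $(d-1)/2$-dimensional Fourier measurement on Bob's ancilla $b$ coherently recombines the split Alice pieces, leaving Alice with mutually orthogonal vectors that she finishes off with a Fourier measurement on $A$.

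Using the complementary projector $B_6=I-B_1-\cdots-B_5$, Alice separates the top states $|\phi_1\rangle,\ldots,|\phi_{d-2}\rangle$ via $|0\rangle_A\langle0|$ (which Bob then distinguishes by a Fourier measurement) from the left states $|\phi_{k+3d-6}\rangle$ (handled by the symmetric ancilla-Fourier trick). What remains is a copy of the same construction~(7) restricted to the indices $\{1,\ldots,d-2\}$ on both sides, i.e., a $(d-2)\otimes(d-2)$ sub-UPB, while the ancilla carries residual entanglement of effective dimension at most $(d-1)/2\otimes(d-1)/2$. The protocol therefore recurses with a strictly smaller effective MES dimension, peeling one tile layer per step, until it terminates at the four central dominoes $|\phi_{d^2-2d-2}\rangle,\ldots,|\phi_{d^2-2d+1}\rangle$, which are distinguishable by LOCC alone by~[30].

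The main obstacle is the bookkeeping at the ancilla-Fourier steps and the tracking of the stopper $|F\rangle$ through every branch: one must verify that the $(d-1)/2$-dimensional Fourier measurement on Bob's ancilla $b$ really does glue the split Alice-side Fourier pieces back into mutually orthogonal vectors, and that the image of $|F\rangle$ remains orthogonal to every identified $|\phi\rangle$ at each stage. This reduces to an elementary Fourier-inversion identity matching the $d=5$ verification in Theorem 1, and propagates uniformly through the recursion.
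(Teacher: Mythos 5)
Your proposal follows essentially the same route as the paper: the same $(d+1)/2\otimes(d+1)/2$ ancilla, the same $(d+1)/2$-outcome rank-$d$ Alice measurement, the same outside-in peeling of the outer tile layer by Bob, and the same reduction to the $(d-2)\otimes(d-2)$ sub-structure with residual entanglement of one dimension less --- the paper merely packages your ``recurse until the central dominoes'' step as an explicit induction on odd $d$ with base cases $d=3$ (Cohen) and $d=5$ (Theorem 1). One small wording caution: the terminal central block together with the surviving piece of the stopper is still a $3\otimes3$ UPB structure and is \emph{not} distinguishable by LOCC alone; it is finished off using the last residual $2\otimes2$ ebit via the protocol of [30], exactly as in the final step of Theorem 1.
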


\begin{proof}
When $d = 3$, Cohen has exhibited that the states can be locally distinguished with a $2\otimes 2$ maximally entangled states \cite{S08}. 
When $d = 5$, we have presented it in Theorem 1. Next, we consider $d = n$, where n is odd. 
First, suppose when $n = d-2$, the above structure of $(d-3)^{2}+1$ states can be perfectly distinguished by LOCC with a $(d-1)/2\otimes (d-1)/2$ maximally entangled state. Then, we present that when $n=d$, the above structure of $(d-1)^{2}+1$ states can be locally distinguished with a $(d+1)/2\otimes (d+1)/2$ maximally entangled state.

First of all, Alice and Bob share a $(d+1)/2\otimes (d+1)/2$ maximally entangled state $|\Psi\rangle_{ab}=|00\rangle+|11\rangle+\cdots+|(d-1)/2(d-1)/2\rangle$. 
Then, Alice performs a $(d+1)/2$-outcome measurement, each outcome corresponding to a rank-$d$ projector:

\begin{eqnarray}
\label{eq.2}
\begin{split}
A_{1}&=|00\rangle_{aA}\langle00|+\cdots+|0(d-1)/2\rangle_{aA}\langle0(d-1)/2|\\
&+|1(d+1)/2\rangle_{aA}\langle1(d+1)/2|+\cdots\\
&+|(d-1)/2(d-1)\rangle_{aA}\langle(d-1)/2(d-1)|,\\
&\vdots\\
A_{(d+1)/2}&=|(d-1)/20\rangle_{aA}\langle(d-1)/20|+\cdots\\
&+|(d-1)/2(d-1)/2\rangle_{aA}\langle(d-1)/2(d-1)/2|\\
&+|0(d+1)/2\rangle_{aA}\langle0(d+1)/2|+\cdots\\
&+|(d-3)/2(d-1)\rangle_{aA}\langle(d-3)/2(d-1)|.\\
\end{split}
\end{eqnarray}

Similarly, we only need to consider $A_{1}$. For operating with $A_{1}$ on systems $aA$, the picture of Fig. 5 is obtained, each of the initial states is transformed into:

\begin{figure}
\small
\centering
\includegraphics[scale=0.5]{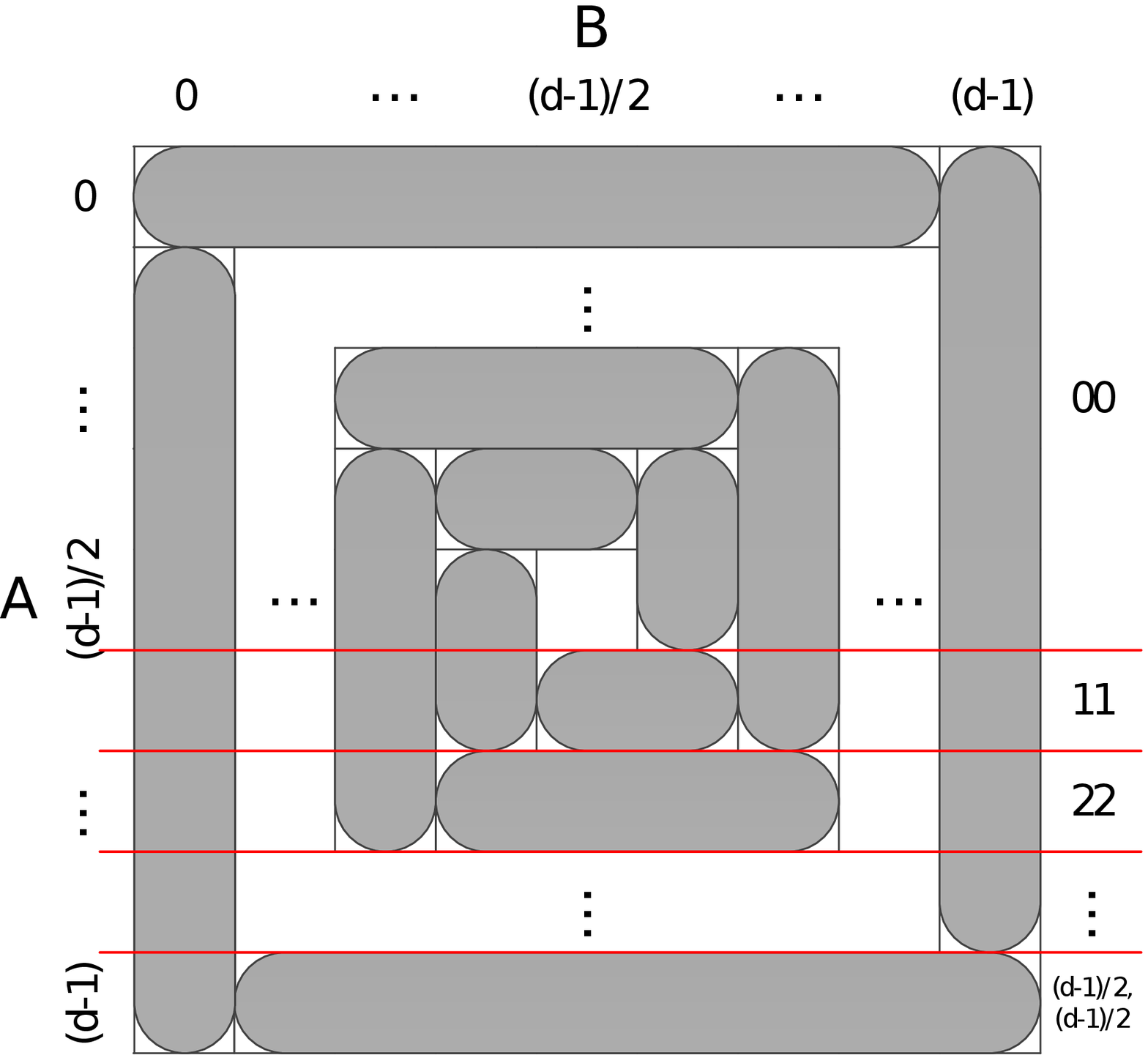}
\caption{Tile structure following Alice's first measurement with outcome $A_{1}$.}
\end{figure}

\begin{eqnarray}
\label{eq.2}
\begin{split}
|\phi'_{k}\rangle&=|\phi_{k}\rangle|00\rangle,k=1,\cdots,d-2,\\ 
|\phi'_{k+d-2}\rangle&=|\sum\limits^{(d-1)/2}_{j=0}\omega^{jk}(j)\rangle|d-1\rangle|00\rangle+\cdots+\\
&|\omega^{(d-2)k}(d-2)\rangle|d-1\rangle|(d-3)/2(d-3)/2\rangle,\\
&k=1,\cdots,d-2,\\
|\phi'_{k+2d-4}\rangle&=|\phi_{k+2d-4}\rangle|(d-1)/2(d-1)/2\rangle,\\
&k=1,\cdots,d-2,\\
|\phi'_{k+3d-6}\rangle&=|\sum\limits^{(d-1)/2}_{j=1}\omega^{jk}(j)\rangle|d-1\rangle|00\rangle+\cdots+\\
&|\omega^{(d-1)k}(d-1)\rangle|d-1\rangle|(d-1)/2(d-1)/2\rangle,\\
&k=1,\cdots,d-2,\\
&\vdots\\
|F\rangle\longrightarrow&|0+\cdots+(d-1)2\rangle|0+\cdots+(d-1)\rangle|00\rangle+\\
&\cdots+|(d-1)\rangle|0+\cdots+(d-1)\rangle\\
&|(d-1)/2(d-1)/2\rangle.
\end{split}
\end{eqnarray}

Similar to Theorem 1,  Bob can identify $|\phi'_{k}, F\rangle$ by projectors $B_{m}=|(d-1)/2\rangle_{b}\langle(d-1/)2|\otimes|\sum\limits^{d-2}_{j=0}\omega^{jm}(j)\rangle_{B}\langle\sum\limits^{d-2}_{j=0}\omega^{jm}(j)|$, $k=2d-3,\cdots,3d-6$, $m=1,\cdots,d-1$, respectively.

Using the projector $B_{d}=|0\rangle_{b}\langle0|\otimes|(d-1)\rangle_{B}\langle(d-1)|+\cdots+|(d-3)/2\rangle_{b}\langle(d-3)/2|\otimes|(d-1)\rangle_{B}\langle(d-1)|$ onto the Bob's Hilbert space, it leaves $|\phi'_{d-1,\cdots,2d-4}\rangle$, $|F\rangle\longrightarrow|0+\cdots+(d-1)2\rangle|(d-1)\rangle|00\rangle+\cdots+|(d-2)\rangle|(d-1)\rangle|(d-3)/2(d-3)/2\rangle$ to distinguish, and annihilates other states in (9). Next, Bob can make a measurement in the Fourier basis on system $b$, and get the same states in Bob's party. Then, Alice can distinguish these states.

Bob's last outcome is a projector onto the remaining
part of Bob's Hilbert space $B_{d+1}=I-B_{1}-\cdots-B_{d}$. This leaves $|\phi'_{1,\cdots,d-2,3d-5,\cdots,(d-1)^2}\rangle$ and $|F\rangle\longrightarrow|0+\cdots+(d-1)2\rangle|0+\cdots+(d-2)\rangle|00\rangle+\cdots+|(d-1)\rangle|0\rangle|(d-1)/2(d-1)/2\rangle$. Then, let Alice use the projector $A_{(d+1)1}=|0\rangle_{a}\langle0|\otimes|0\rangle_{A}\langle0|$, it leaves $|\phi'_{1,\cdots,d-2}\rangle$ and $|F\rangle\longrightarrow|0\rangle|0+\cdots+(d-2)\rangle|00\rangle$, which can be easily distinguished by Bob. When Alice uses a projector $A_{(d+1)2}=I-A_{(d+1)1}$, it can leave $|\phi'_{3d-5,\cdots,(d-1)^2}\rangle$, $|F\rangle\longrightarrow|1+\cdots+(d-1)2\rangle|0+\cdots+(d-2)\rangle|00\rangle+\cdots+|(d-1)\rangle|0\rangle|(d-1)/2(d-1)/2\rangle$ and annihilate other states. Then, Bob uses $B_{(d+1)21}=(|0\rangle_{b}\langle0|+\cdots+|(d-1)/2\rangle_{b}\langle(d-1)/2|)\otimes|0\rangle_{B}\langle0|$, it leaves $|\phi'_{3d-5,\cdots,4d-8}\rangle$ and $|F\rangle\longrightarrow|1+\cdots+(d-1)2\rangle|0\rangle|00\rangle+\cdots+|(d-1)\rangle|0\rangle|(d-1)/2(d-1)/2\rangle$. Similar to operator $B_{d}$, Bob can make a measurement in the Fourier basis on system $b$, and get the same states in Bob's party. Then, Alice has the orthogonal states which can be distinguished. Finally, for the projector $B_{(d+1)22}=I-B_{621}$, it leaves $|\phi'_{4d-7,\cdots,(d-1)^2}\rangle$ and $|F\rangle\longrightarrow|1+\cdots+(d-1)2\rangle|1+\cdots+(d-2)\rangle|00\rangle+\cdots+|(d-2)\rangle|1+\cdots+(d-2)\rangle|(d-3)/2(d-3)/2\rangle$. According to previous suppose, these states are locally distinguishable.

Thus, with a $(d+1)/2\otimes (d+1)/2$ maximally entangled state, these states of (7) can be perfectly distinguished by LOCC. The proof is completed.
\end{proof}

From the proof of Theorem 1 and 2, we can know that the two methods are very similar. Then, with $(d-1)/2$ copies of $2\otimes 2$ maximally entangled states, after Alice's measurement, one of outcomes can let these states of (7) transform into the structure of Fig. 6. Similar to Theorem 3, we only need to change assisted system $a$ into $a_{1} \cdots a_{(d-1)/2}$ and $b$ into $b_{1} \cdots b_{(d-1)/2}$ and easily prove that these states of (7) are also locally distinguishable, where $a_{1}b_{1}\cdots a_{(d-1)/2}b_{(d-1)/2}$ are the parties of assisted systems. Therefore, we have the following result.

\begin{theorem}
In $d\otimes d$, where $d$ is odd, the states of (7) can be locally distinguished with $(d-1)/2$ copies of $2\otimes 2$ maximally entangled states.
\end{theorem}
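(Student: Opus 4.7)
The plan is to mimic the structure of Theorem 3, with each $(d+1)/2$-dimensional shared qudit replaced by $(d-1)/2$ qubits, and to proceed by induction on odd $d\geq 3$. The base case $d=3$ is Cohen's result \cite{S08}, and $d=5$ is exactly Theorem 2, which already illustrates the core technique: replacing a single high-dimensional qudit measurement on Alice's side by a sequence of two-outcome qubit measurements that jointly encode the same coarse partition. This is why the authors are justified in asserting that after Alice's appropriate measurement, outcome one yields the tile picture of Fig.~6, which is combinatorially the same as the picture of Fig.~5 from Theorem 3.

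Concretely, Alice and Bob would share $(d-1)/2$ copies of $|00\rangle+|11\rangle$ on registers $a_{1}b_{1},\ldots,a_{(d-1)/2}b_{(d-1)/2}$. Alice performs $(d-1)/2$ sequential two-outcome projective measurements, each one coupling a single ancilla qubit $a_{k}$ with her data register $A$ via a rank-$d$ projector built exactly in analogy with equations (4) and (5). Reading the resulting binary outcome string, the parties effectively identify one of the $(d+1)/2$ coarse row-blocks of Fig.~5, which played the role of the $A_{1},\ldots,A_{(d+1)/2}$ branches in Theorem 3. Cyclic symmetry of the UPB construction reduces everything to analysing the all-zero outcome string, corresponding to $A_{1}$ of Theorem 3; this leaves the states in the form suggested by Fig.~6, with each ancilla state now being a fixed computational-basis string on $(d-1)/2$ qubits rather than a single qudit eigenstate.

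From there I would carry out Bob's side line-by-line as in Theorem 3, simply rewriting each of his projectors $B_{m}$, $B_{d}$, $B_{(d+1)21}$, $B_{(d+1)22}$ with multi-qubit projectors on $b_{1},\ldots,b_{(d-1)/2}$ in place of the original single-qudit projectors on $b$, using the identifications already exhibited for $d=5$ in the proof of Theorem 2. Whenever Theorem 3 called for a Fourier-basis measurement on $b$ to collapse Bob's side to a single state so that Alice's remaining branches become orthogonal, here the same effect is produced by measuring each relevant qubit $b_{k}$ in the $|0\pm 1\rangle$ basis, exactly as done in Theorem 2. Finally, after Alice's auxiliary projectors $A_{(d+1)1}$ and $A_{(d+1)2}$ are rewritten with the new ancilla indices, the last branch $B_{(d+1)22}$ isolates exactly the $(d-3)^{2}+1$ UPB states of the inner tile in Fig.~6, supported on one fewer ancilla pair, and the inductive hypothesis finishes the argument with $(d-3)/2$ copies.

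The only genuine obstacle is verifying that Alice's sequence of qubit measurements indeed reproduces the effect of the coarse $(d+1)/2$-outcome measurement of Theorem 3, in particular that the rank-$d$ structure is preserved at each stage and that the ``off-diagonal'' combinations of binary outcomes either annihilate the UPB or fall, by symmetry, under a cyclically equivalent branch. This is a bookkeeping step on the tile diagram of Fig.~6 rather than a conceptual difficulty, and the explicit templates of Theorems 2 and 3 make the adaptation essentially mechanical.
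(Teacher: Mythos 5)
Your proposal follows essentially the same route as the paper, which likewise proves Theorem 4 by taking the protocol of Theorem 3 and replacing the single $(d+1)/2$-dimensional ancilla pair by $(d-1)/2$ qubit pairs, exactly as Theorem 2 does for Theorem 1 in the $d=5$ case (the paper's own argument is only a short sketch appealing to Fig.~6). Your version is, if anything, more explicit than the paper's, spelling out the sequential two-outcome measurements, the binary labelling of the $(d+1)/2$ row-blocks, and the induction on the inner tile.
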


\begin{figure}
\small
\centering
\includegraphics[scale=0.5]{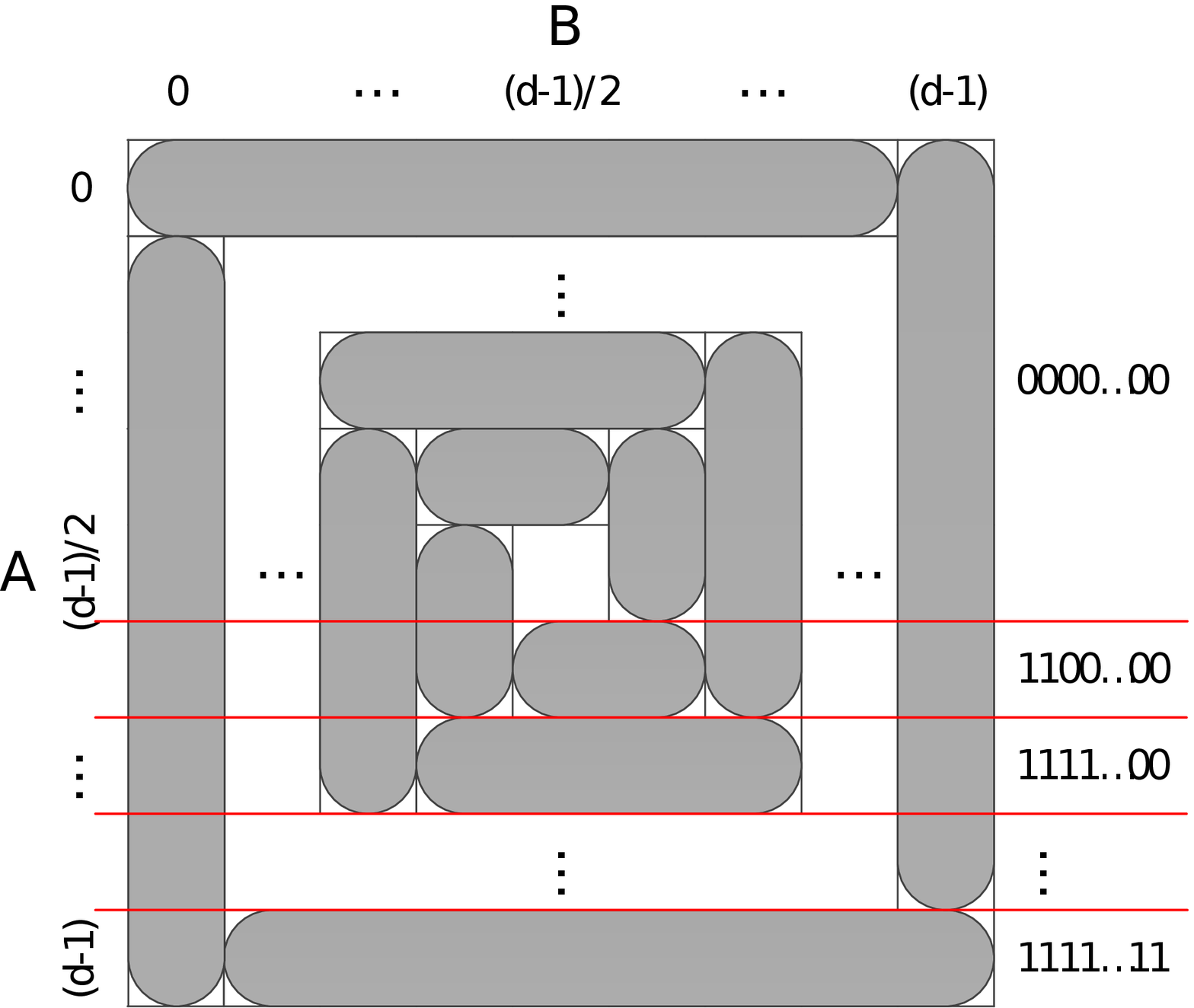}
\caption{Tile structure following Alice's measurement with outcome $A_{11\cdots1}$.}
\end{figure}

In addition, Cohen proved that a class of UPB in $d\otimes d, d=2n, n \geqslant 2$, which has the structure of Fig. 7, can be  locally distinguished with a $n\otimes n$ maximally entangled state [30]. In the same way, we can use $(n-1)$ copies of maximally entangled states to locally distinguish these states. First, Alice performs a series of measurements on systems $a_{1} \cdots a_{n-1}A$. Then, for one of outcomes, we get these states which has the structure of Fig. 8. Next, similar to the proof presented by Cohen in [30], we can easily get the following result.

\begin{theorem}
In $d\otimes d, d=2n, n \geqslant 2$, the states, which has the structure of Fig. 7, can be locally distinguished with $(n-1)$ copies of $2\otimes 2$ maximally entangled states.
\end{theorem}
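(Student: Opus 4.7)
The plan is to mimic the jump from Theorem 1 to Theorem 2 (one $3\otimes 3$ maximally entangled state is replaced by two copies of $2\otimes 2$ maximally entangled states), but now applied to Cohen's $n\otimes n$ protocol in $d\otimes d$ with $d=2n$. First, I would have Alice and Bob share $(n-1)$ copies of $2\otimes 2$ maximally entangled states, giving a joint ancilla on $a_{1}b_{1}\cdots a_{n-1}b_{n-1}$ that is effectively a $2^{n-1}\otimes 2^{n-1}$ maximally entangled state. Since $2^{n-1}\geqslant n$ for every $n\geqslant 2$, one can pick $n$ qubit strings $|e_{0}\rangle,\ldots,|e_{n-1}\rangle\in\{0,1\}^{n-1}$ and use the $n$-dimensional subspace they span as the effective maximally entangled register required by Cohen.

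The second step is to replace Cohen's single $n$-outcome rank-$d$ measurement on $aA$ by a sequence of $(n-1)$ two-outcome rank-$d$ projective measurements on $a_{1}\cdots a_{n-1}A$, constructed so that the composite outcomes realize precisely the projectors Cohen uses, with the remaining $2^{n-1}-n$ outcome strings being either forbidden by the initial rank-$d$ projectors or equivalent to the kept ones by an ancilla relabeling. By the symmetry argument already used in Theorems 1--3, it then suffices to analyze one outcome sequence, say $A_{11\cdots 1}$, after which the ensemble has the tile structure of Fig. 8.

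Starting from Fig. 8, I would run Bob's discrimination protocol verbatim from [30], translating each ingredient into the multi-qubit language: a single-qudit projector on $b$ becomes the product of the corresponding single-qubit projectors on $b_{1},\ldots,b_{n-1}$, and the Fourier-basis measurement on $b$ that Cohen uses to equalize Bob's reduced state across outcomes is replaced by individual $|0\pm 1\rangle$ measurements on each $b_{j}$, exactly as in the last step of the proof of Theorem 2 (because in each relevant branch only the qubits that still carry discriminating information need to be rotated). The residual ensemble then reduces by induction on $n$ to the Fig. 7 structure in $(d-2)\otimes(d-2)$ with $(n-2)$ copies of $2\otimes 2$ entanglement, closing the recursion; the base case $n=2$ is Cohen's original $2\otimes 2$ protocol.

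The main obstacle I anticipate is purely bookkeeping: one has to pick the embedding $\{|e_{0}\rangle,\ldots,|e_{n-1}\rangle\}$ and the order of Alice's $(n-1)$ binary measurements consistently, so that the per-qubit Hadamard measurements on $b_{1},\ldots,b_{n-1}$ actually implement the equalization Cohen obtains from a single $n$-dimensional Fourier measurement, and one has to verify that the $2^{n-1}-n$ unused ancilla strings are never populated along any branch. Both points are dimensional bookkeeping analogous to what was already carried out in the proof of Theorem 3, so no genuinely new idea should be required.
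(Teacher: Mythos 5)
Your plan coincides with the paper's own argument for this theorem, which is itself only a two-sentence sketch: share $(n-1)$ qubit pairs, let Alice realize Cohen's ancilla correlations through a sequence of binary rank-$d$ projective measurements on $a_{1}\cdots a_{n-1}A$ (all $2^{n-1}$ outcome strings being equivalent up to relabeling, exactly as in Theorem 2, so one may analyze the single branch giving the Fig.~8 structure), and then rerun Cohen's protocol with each ancilla projector replaced by a product of single-qubit projectors and the $n$-level Fourier measurement replaced by $|0\pm 1\rangle$ measurements on each $b_{j}$ --- the outcome amplitudes all have equal modulus, so the orthogonality of Alice's residual states survives just as in Theorem 2. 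One caveat: your closing step, an induction reducing Fig.~7 in $d\otimes d$ to Fig.~7 in $(d-2)\otimes(d-2)$ with $(n-2)$ qubit pairs, imports the nested-frame geometry of the odd-$d$ family of Fig.~4 (where Theorem 3 does recurse); the GenTiles1 structure of Fig.~7 has all tiles of the same width $n$ wrapping cyclically and contains no such sub-UPB, so Cohen's protocol --- and hence its qubit-wise translation --- terminates directly rather than by recursion. Since you also propose running the protocol of [30] verbatim, this mischaracterization is not load-bearing, but the stated recursion is not actually available.
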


\begin{figure}
\small
\centering
\includegraphics[scale=0.45]{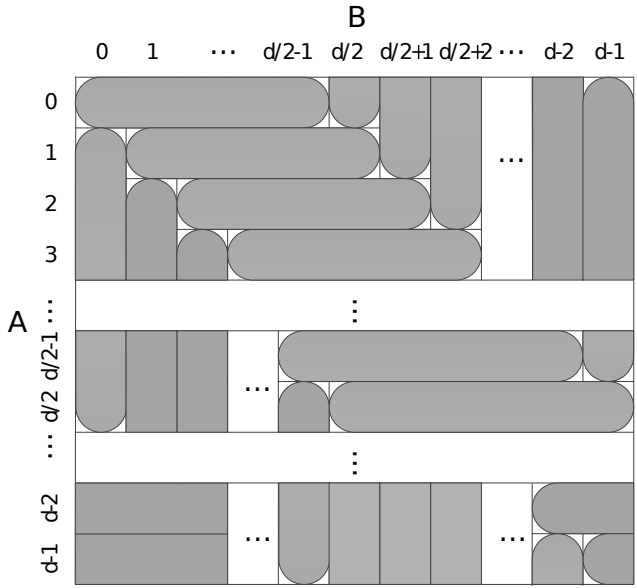}
\caption{Tile structure of a class of UPB in $d\otimes d$, $d$ is even \cite{DT03}. The set of vertical tile states are $|V_{mk}\rangle=|k\rangle|\sum\limits^{n-1}_{j=0}\omega^{jm}(j+k+1 \bmod d)\rangle$ and the set of horizontal tile states are $|H_{mk}\rangle=|\sum\limits^{n-1}_{j=0}\omega^{jm}(j+k\bmod d)\rangle|k\rangle$, $m=1,\cdots, n-1, k=0,\cdots,d-1$. In addition, there is also a stopper state $|F\rangle=\sum\limits^{d-1}_{i=0}\sum\limits^{d-1}_{j=0}|i\rangle|j\rangle$.}
\end{figure}

\begin{figure}
\small
\centering
\includegraphics[scale=0.45]{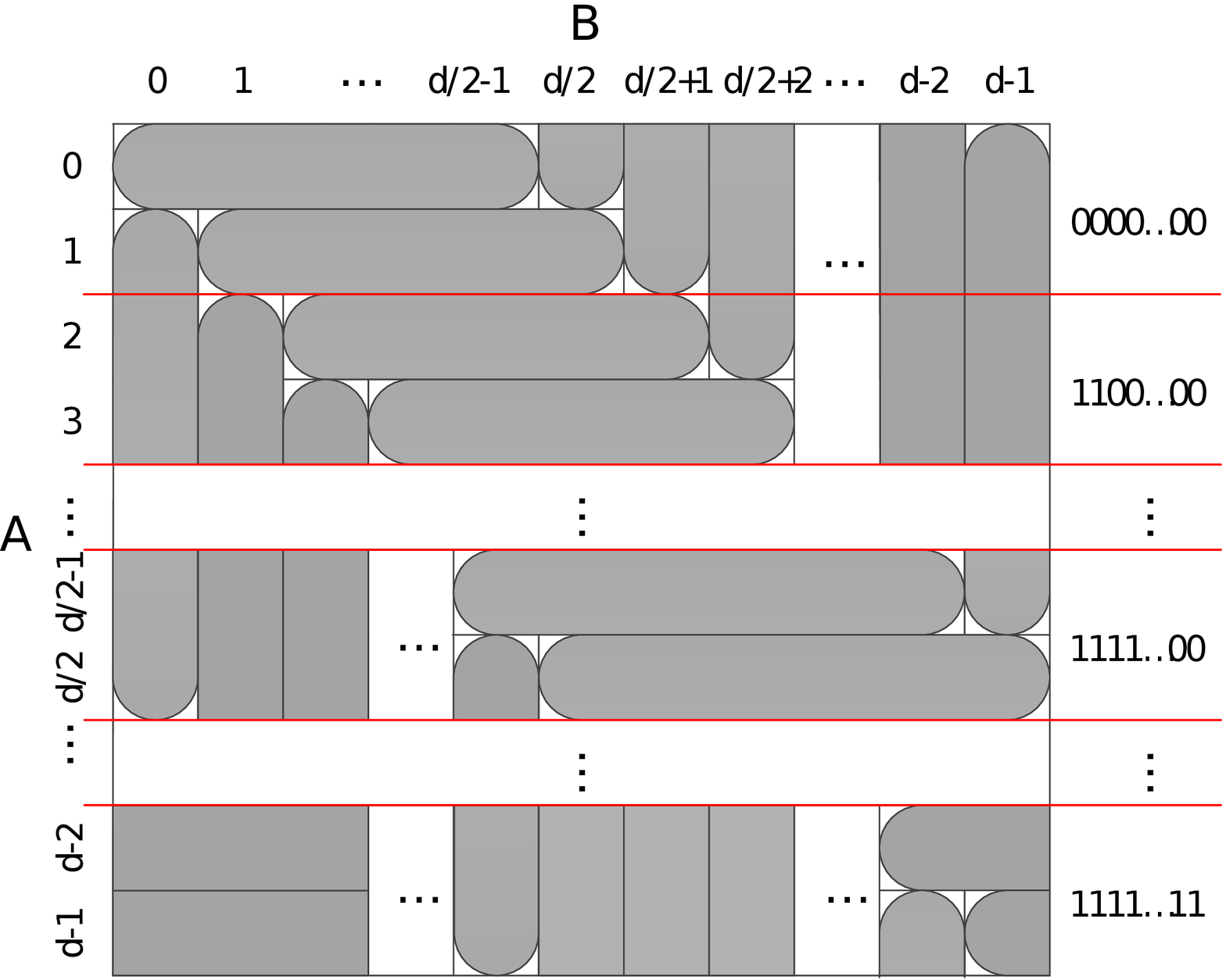}
\caption{Tile structure following Alice's measurement with outcome $A_{11\cdots1}$.}
\end{figure}

Different from previous a high-dimensional entanglement resource [30], we only use multiple copies of $2\otimes2$ maximally entangled states to complete this task. In addition, the method should be relatively easier to implement in real experiment because it only needs one equipment which can produce $2\otimes 2$ maximally entangled states instead of high-dimensional entangled states which will change for different sets of quantum states. 

\section{Conclusion}

In this paper, for entanglement as resource to distinguish unextendible product bases, we present two methods based on different entanglement resource. Our results can lead to a better understanding of the relationship between nonlocality and entanglement. Recently, there are also some results about orthogonal product states, but not UPB [37,38]. Finally, it is interesting what kind of state sets can always be perfectly distinguished by LOCC using multiple copies of $2\otimes 2$ maximally entangled states without teleportation.

\begin{acknowledgments}
The authors are grateful for the anonymous referees' suggestions to improve the quality of this paper. This work was supported by the Beijing Natural Science Foundation (Grant No. 4194088), the NSFC (Grants No. 11847210 and No. 61701553), the National Postdoctoral Program for Innovative Talent (Grant No. BX20180042), and the China Postdoctoral Science Foundation (Grant No. 2018M640070).
\end{acknowledgments}

\nocite{*}

\bibliography{apssamp}

\end{document}